\documentclass[11pt, a4paper]{article}

\usepackage[utf8]{inputenc}   
\usepackage[T1]{fontenc}      
\usepackage[english]{babel}   
\usepackage{amsmath, amssymb, amsthm, mathtools}
\usepackage{bm}              
\usepackage{textcomp}        
\usepackage{enumitem}        
\usepackage{geometry}       
\usepackage{booktabs,tabularx,array}
\usepackage{graphicx}          
\usepackage{tikz-cd}		   
\usepackage{xcolor}            
\DeclareGraphicsExtensions{.pdf,.eps,.jpg,.png}
\graphicspath{{figures/}{figure/}{pictures/}{picture/}{pic/}{pics/}{image/}{images/}}

\usepackage[colorlinks=true, allcolors=blue]{hyperref} 
\usepackage[capitalize]{cleveref}
\usepackage[numbers, sort&compress]{natbib}

\makeatletter

\newcommand{\Rmnum}[1]{\expandafter\@slowromancap\romannumeral #1@}
\makeatother

\newcommand{\E}{\mathbb E}
\newcommand{\Pp}{\mathbb P}
\newcommand{\Var}{\operatorname{Var}}
\newcommand{\Cov}{\operatorname{Cov}}
\newcommand{\LambdaL}{\Lambda_L}
\newcommand{\BL}{B_L}
\newcommand{\EL}{E_L}
\newcommand{\dd}{\,\mathrm d}
\newcommand{\abs}[1]{\lvert #1\rvert}
\newcommand{\norm}[1]{\lVert #1\rVert}

\theoremstyle{plain} 
\newtheorem{theorem}{Theorem}
\newtheorem{lemma}[theorem]{Lemma}

\theoremstyle{definition}

\theoremstyle{remark}

\renewcommand{\[}{\begin{equation}}
\renewcommand{\]}{\end{equation}}

\title{\Large\bf Boundary Free Energies in Disordered Ising Models}
\author{Hexiang Wang \and Keheng Zhu \and Mauris Chueng}
\newcommand{\Addresses}{{
		\bigskip
		\footnotesize
		
		\textsc{Hexiang Wang}, \textsc{School of Mathematical Sciences, Nankai University, Tianjin, 300071, China}\par\nopagebreak
		\texttt{Kui6539@outlook.com}
		\medskip
		
		\textsc{Keheng Zhu}, \textsc{Academy for Multidisciplinary Studies, School of Mathematics Sciences, Capital Normal
			University, Beijing, 100048, China}\par\nopagebreak
		\texttt{hexistartop@gmail.com}
		\medskip
		
		\textsc{Mauris Chueng}, \textsc{School of Statistics and Data Science, Jilin University of Finance and Economics, Changchun, 130117, China}\par\nopagebreak
		\texttt{maurischueng@gmail.com}
		\medskip
}}
\date{}

\begin{document}
	
	\maketitle
	
\begin{abstract}
The boundary correction to the free energy of a disordered Ising model depends
on the boundary condition, the normalization, and the finite-volume sequence.
We give a one-dimensional i.i.d. counterexample showing that the surface
free-energy density need not be independent of the van Hove sequence and that
the corresponding random correction need not converge in probability. For
bounded couplings in the Dobrushin uniqueness regime on cubic boxes in
dimensions $d\geq2$, we prove convergence of the normalized free-to-fixed
boundary free-energy difference in expectation, almost surely, and in every
$L^p$, $1\leq p<\infty$. For Gaussian boundary couplings, we derive an
exact finite-volume interpolation identity and explain why it does not by
itself imply a low-temperature surface limit. For a seam-flip free-energy
difference $D_L$ across a set $S_L$ of independent symmetric bonds of variance
$v$, we prove $\Var(D_L)\leq4v\abs{S_L}$; one-dimensional examples show that
symmetry and finite moments alone do not determine a stiffness exponent, and
the low-temperature Gaussian Edwards--Anderson problem remains open.
\end{abstract}

\section{Introduction}

Consider an experimentalist who measures the free energy of the same
disordered magnet twice. In the first experiment the boundary is left free;
in the second, the sample is placed in contact with a frozen exterior spin
configuration. For the bounded couplings considered in our surface-limit
theorem, the two free energies per site become indistinguishable as the box
grows. After the difference is multiplied by the linear size, however, a
boundary contribution may remain, and it is natural to ask whether this
contribution approaches a deterministic constant or records the fluctuations
of a macroscopic interface.

This question has several inequivalent mathematical formulations. Its answer
depends on the dimension, the boundary condition, the disorder law, the shape
of the finite volumes, and whether the disorder is averaged before taking the
limit. We therefore begin by specifying the model and by separating a
free-to-fixed boundary free-energy difference from a periodic/antiperiodic, or
seam-flip, domain-wall free energy.

Our first conclusion is negative: the hypotheses in the question do not imply
a finite-volume-sequence-independent surface limit. In \cref{thm:impossibility} we
give an i.i.d. one-dimensional example with two distinct van Hove
subsequential limits, and we show that even regular intervals need not give a
sample limit in probability. Our positive result, \cref{thm:surface}, proves a
deterministic surface free-energy density for bounded disorder in the
Dobrushin uniqueness regime on cubic boxes. The proof identifies the limit
with a half-space boundary response and establishes almost-sure and $L^p$
convergence in dimensions $d\geq2$. For rectangular boxes whose minimum side
tends to infinity, expectation and $L^p$ convergence remain valid; the
almost-sure conclusion requires the summability condition in the theorem.

For Gaussian boundary disorder, \cref{sec:interpolation} derives an exact
finite-volume interpolation formula. The interpolation proposed in the
question does not start from the free-boundary Hamiltonian; the
endpoint-compatible interpolation does, but its boundary-overlap identity contains no mechanism
that forces a low-temperature thermodynamic limit. Finally,
\cref{thm:seam} gives a temperature-uniform variance estimate for seam-flip
free-energy differences. This estimate bounds a root-mean-square stiffness
exponent but does not establish the existence or value of such an exponent;
see \cite{FisherHuse1988,NewmanStein2003} for the physical and mathematical
background of stiffness scaling in short-range spin glasses.
The unresolved Gaussian low-temperature problem is stated precisely in the
final section.

\section{Model and notation}
\label{sec:model}

\subsection{Geometry, spins, and boundary edges}

Fix $d\geq1$. For the regular-box results let
\[
  \LambdaL=\{1,\ldots,L\}^d\subset\mathbb Z^d,
  \qquad \abs{\LambdaL}=L^d.
\]
The internal nearest-neighbor edges and the oriented boundary edges are
\begin{align*}
  \EL&=\bigl\{\{x,z\}:x,z\in\LambdaL,\ \abs{x-z}_1=1\bigr\},\\
  \BL&=\bigl\{(x,y):x\in\LambdaL,\ y\notin\LambdaL,
                    \ \abs{x-y}_1=1\bigr\}.
\end{align*}
For cubes,
\begin{equation}\label{eq:counts}
  \abs{\EL}=d(L-1)L^{d-1},
  \qquad b_L:=\abs{\BL}=2dL^{d-1}.
\end{equation}
For a finite set $\Lambda\subset\mathbb Z^d$, let $B(\Lambda)$ denote its set
  of oriented nearest-neighbor boundary edges, defined as above. A sequence
  $(\Lambda_n)$ is called a van Hove sequence if, for every fixed $r\geq1$,
  \begin{equation}\label{eq:van-hove}
    \frac{\abs{\{x\in\Lambda_n:
      \operatorname{dist}(x,\Lambda_n^c)\leq r\}}}{\abs{\Lambda_n}}
    \longrightarrow0.
  \end{equation}
This is the standard van Hove condition used in thermodynamic-limit arguments;
see \cite{Georgii2011,Ruelle1969}.

\begin{figure}[htbp]
	\centering
	\begin{minipage}{0.48\textwidth}
		\centering
		\includegraphics[width=\textwidth]{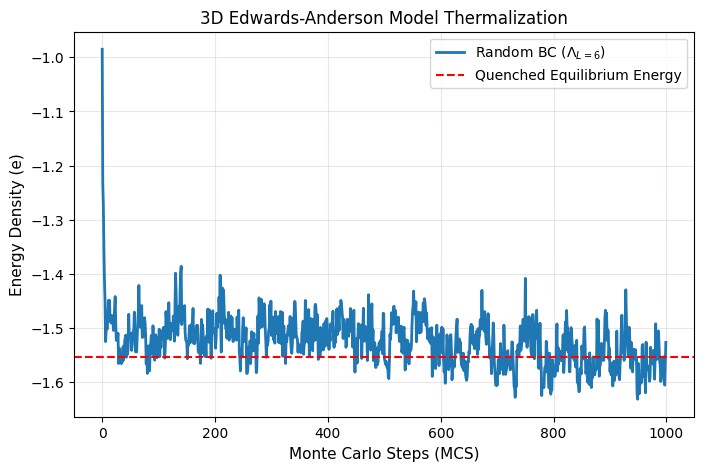}
		\caption{Schematic preview of the half-space lattice geometry $\mathbb{H}$, illustrating the introductory configuration of boundary couplings and the finite-volume box approximation layout.}
		\label{fig:half_space_preview}
	\end{minipage}
	\hfill
	\begin{minipage}{0.48\textwidth}
		\centering
		\includegraphics[width=\textwidth]{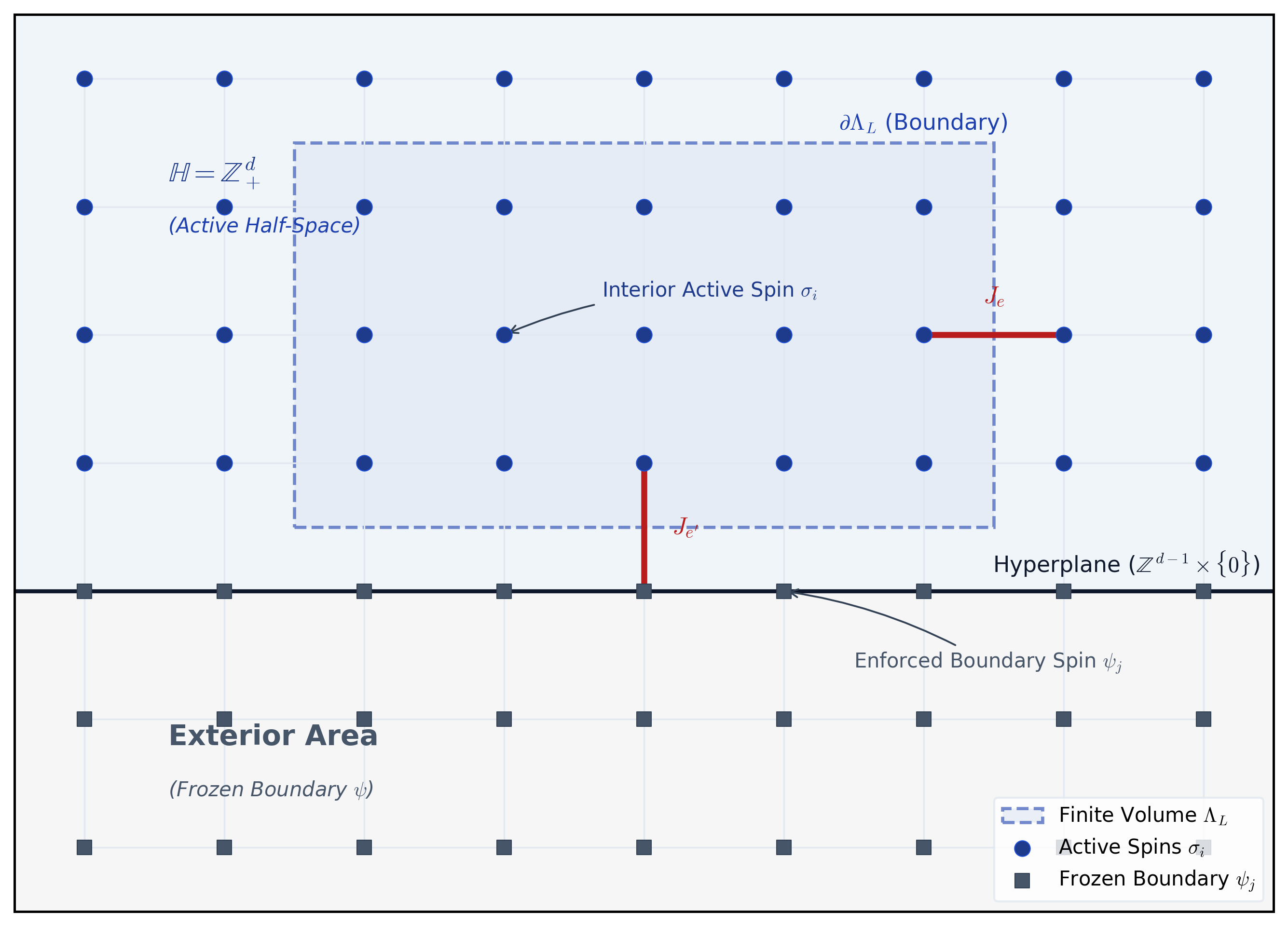}
		\caption{High-resolution numerical rendering of the half-space coupling graph, detailing the boundary layer interactions and spin configurations within the uniform Dobrushin uniqueness regime.}
		\label{fig:half_space_highres}
	\end{minipage}
\end{figure}

Associate an Ising spin $\sigma_x\in\{-1,+1\}$ with each $x\in\LambdaL$.
An exterior spin pattern $\psi=(\psi_y)\in\{-1,+1\}$ is deterministic.
For $e=(x,y)\in\BL$ put $q_e(\sigma)=\sigma_x\psi_y$.

Let $J_a$ be the internal coupling on $a=\{x,z\}\in\EL$ and
$K_e$ the boundary coupling on $e\in\BL$. Unless explicitly stated otherwise,
all couplings are independent. The free Hamiltonian and the Hamiltonian with a
fixed exterior-spin boundary condition are
\begin{align}
  H_L^{\mathrm{free}}(\sigma;J)
    &=-\sum_{a=\{x,z\}\in\EL}J_a\sigma_x\sigma_z,
       \label{eq:Hfree}\\
  H_L^{\mathrm{fix}}(\sigma;J,K,\psi)
    &=H_L^{\mathrm{free}}(\sigma;J)
      -\sum_{e\in\BL}K_e q_e(\sigma).
       \label{eq:Hfix}
\end{align}
Thus the spins outside $\LambdaL$ are prescribed, whereas the spins in
$\LambdaL$ remain unconstrained. This standard fixed boundary condition is
different from hard conditioning of spins inside $\LambdaL$, and it is also
different from periodic or antiperiodic boundary conditions.

For fixed $\beta\in(0,\infty)$ and
$b\in\{\mathrm{free},\mathrm{fix}\}$,
\[
  Z_L^b=\sum_{\sigma\in\{-1,+1\}^{\LambdaL}}
             e^{-\beta H_L^b(\sigma)},
  \quad F_L^b=-\frac1\beta\log Z_L^b,
  \quad f_L^b=\frac{F_L^b}{\abs{\LambdaL}}.
\]
The sample boundary correction and its quenched mean are
\begin{equation}\label{eq:Delta-def}
  \Delta F_L(J,K)=F_L^{\mathrm{fix}}(J,K)-F_L^{\mathrm{free}}(J).
\end{equation}
Thus $\Delta F_L$ is a random free-energy difference, whereas
$\E\Delta F_L$ is its disorder average. We use ``quenched free energy'' for
$\E F_L=-\beta^{-1}\E\log Z_L$ and ``quenched log-partition function'' for
$\E\log Z_L$.

For an arbitrary finite $\Lambda\subset\mathbb Z^d$, the notation
$\Delta F_\Lambda$ refers to the analogous free-to-fixed free-energy
difference, with boundary-edge set $B(\Lambda)$.

\subsection{Hypotheses for the three results}

We use the following three sets of assumptions.

\begin{description}[leftmargin=2.8em,labelindent=0em,style=nextline]
\item[(G) Gaussian interpolation identities.]
The boundary-coupling vector $K=(K_e)_{e\in\BL}$ is centered Gaussian with covariance
$\Gamma$ and is independent of the internal disorder. For the i.i.d. formulas,
$\Gamma=\kappa^2 I$. When concentration over all bonds is used, the internal bonds
are also independent $N(0,\kappa^2)$ variables. Whenever an almost-sure claim
compares different volumes, these i.i.d. variables are jointly realized as
restrictions of a single Gaussian edge field on $\mathbb Z^d$.

\item[(D) Dobrushin uniqueness regime.]
A single i.i.d. symmetric coupling field is defined on all nearest-neighbor
edges of $\mathbb Z^d$ and is used to couple all volumes.
Its restrictions to $\EL$ and $\BL$ are denoted by $J$ and $K$, respectively,
and satisfy $\abs{J_a},\abs{K_e}\leq J_*$. Put
\begin{equation}\label{eq:dob-condition}
  \alpha:=2d\tanh(\beta J_*)<1.
\end{equation}
The exterior pattern is all plus. By symmetry of the boundary couplings, the
same disorder-averaged conclusions hold for every deterministic,
disorder-independent exterior pattern. The limit is $L\to\infty$ at fixed
$\beta$ satisfying \eqref{eq:dob-condition}.

\item[(S) Seam-flip comparison.]
Two systems use the same disorder realization and differ by a sign flip on a seam
$S_L$ of independent symmetric bonds. No high-temperature assumption is made.
The bound remains valid at $\beta=\infty$ for ground-state energies.
Almost-sure statements across volumes use a joint realization of all
finite-volume disorders on one probability space.
\end{description}

Regime (D) yields a surface limit for regular boxes. Regime (G) gives exact
identities at every finite $L$ but does not by itself prove a low-temperature
limit. Regime (S) concerns a domain-wall free energy rather than the
free-to-fixed boundary free-energy difference.

\section{Main results}

\begin{theorem}[Counterexamples to model-independent convergence]
\label{thm:impossibility}
The assumptions stated in the question do not determine a
finite-volume-sequence-independent surface free-energy limit. More precisely:
\begin{enumerate}
\item There is a nested van Hove exhaustion of $\mathbb Z$ and a bounded i.i.d.
  symmetric nearest-neighbor disorder law for which
  $\E\Delta F_\Lambda/\abs{B(\Lambda)}$ has two distinct subsequential limits.
\item Along regular one-dimensional intervals the expected surface term does
  converge, but for a nondegenerate coupling magnitude the sample term need
  not converge even in probability.
\item If $f_L$ denotes $\E F_L/\abs{\LambdaL}$, the proposed quantity is
  deterministic at every $L$ and cannot itself have quenched fluctuations.
  If $f_L$ denotes a sample density, it is a different random object.
\item Fixed exterior-spin, hard fixed-spin, free versus periodic, and
  periodic versus antiperiodic comparisons are inequivalent and can have incompatible
  mean and fluctuation behavior.
\item For one-dimensional periodic versus antiperiodic ground-state
  comparisons, i.i.d. Rademacher and symmetric Laplace couplings have
  root-mean-square stiffness exponents $0$ and $-1$, respectively. Thus
  symmetry and finite moments do not determine a universal exponent.
\end{enumerate}
The first two assertions are proved in \cref{sec:counterexamples}; the fifth is
proved in \cref{eq:ring-ground} and the discussion following it.
\end{theorem}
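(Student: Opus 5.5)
In one dimension everything factorizes, and the plan rests on that. For a finite set $\Lambda\subset\mathbb Z$ that is a disjoint union of maximal intervals, the free Ising chain on an interval with couplings $J$ has $Z^{\mathrm{free}}=2\prod_a 2\cosh(\beta J_a)$ after the gauge change $\tau_x=\sigma_x\sigma_{x+1}$. Adding two boundary couplings $K_\ell,K_r$ to plus exterior spins turns the interval into an open chain with both ends pinned. Hence $\Delta F$ for one interval of length $n$ equals the free energy of a chain of $n+1$ bonds $K_\ell,J_1,\dots,J_{n-1},K_r$ with fixed endpoints, minus that of the $n-1$ internal bonds with free ends.

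The transfer-matrix (gauge) computation gives $-\beta\Delta F=\log\bigl(1+\prod_{b}\tanh(\beta b)\bigr)+\log\bigl(2\cosh\beta K_\ell\cosh\beta K_r\bigr)$, where the product runs over all $n+1$ bonds of the pinned chain. The exact constant should be checked in this step, but the structure is what matters: a deterministic-shape local term plus a nonlocal term $\log(1+\prod\tanh)$. This term is exponentially small in $n$ when the couplings are bounded away from $\pm\infty$, and of order one when $n$ is short.

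For item 1 I will take bounded symmetric i.i.d. couplings, say $\pm J$ Rademacher, and build a nested exhaustion $\Lambda_k$. Along one subsequence $\Lambda_k$ is a single long interval, so $\abs{B}=2$ and the mean boundary term tends to $c_1=-\beta^{-1}\E\log(2\cosh\beta K_\ell\cosh\beta K_r)$. Along another subsequence, $\Lambda_k$ is the long interval together with many added isolated sites, or many length-one intervals far away. I will choose their number $m_k$ so that the total number of added sites is $o(\abs{\Lambda_k})$, which keeps the van Hove property, while $m_k\gg1$ so these pieces dominate $B(\Lambda_k)$. For a single site with two boundary bonds, $-\beta\Delta F=\log\bigl(2\cosh\beta(K_\ell+K_r)\bigr)-\log 2$. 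Its normalized mean $c_2$ differs from $c_1$; I will verify this by a direct two-value computation for Rademacher couplings, with $\tanh^2(\beta J)$ making the difference nonzero. Nestedness is kept by always adding new far-away pieces and never removing any.

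For item 2, on regular intervals $\{1,\dots,L\}$ the expectation converges because the nonlocal term decays like $(\tanh\beta J)^{L}$. The sample term is $\log(2\cosh\beta K_\ell\cosh\beta K_r)$, up to a vanishing error, and it depends only on the two boundary couplings. With a nondegenerate magnitude distribution for $\abs{K}$ these are fresh independent variables for each $L$, or, under a joint realization, the right-end coupling $K_r=J_{\{L,L+1\}}$ changes with $L$. In either case the term is a nondegenerate random variable whose law is the same for every $L$, and which is asymptotically independent along sparse subsequences. It therefore cannot converge in probability, since a Cauchy-in-probability argument would force degeneracy.

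Items 3 and 4 are definitional. If $f_L=\E F_L/\abs{\LambdaL}$ then it is a number and has no fluctuations. For item 4 I exhibit the one-dimensional formulas for each comparison: the free-versus-periodic difference is $\log(1+\prod\tanh)$-type, with mean tending to $0$ and no boundary constant, whereas fixed exterior spins give a nonzero constant. For item 5, the periodic-versus-antiperiodic ground-state difference on a ring is $2\min_a\abs{J_a}$ in absolute value, with a random sign. For Rademacher couplings this equals $2J$, giving exponent $0$. For Laplace couplings the minimum of $L$ i.i.d. $\operatorname{Exp}(1)$ variables is $\operatorname{Exp}(L)$, whose root-mean-square is $\sqrt2/L$, giving exponent $-1$. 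The main obstacle is item 2 under a joint realization, where I must make sure the nonconvergence argument uses only the finitely many boundary bonds and handles dependence between different $L$.
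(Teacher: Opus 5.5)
Your plan for items 1, 2, 3 and 5 follows the paper's route. It uses the one-dimensional high-temperature (gauge) factorization and a nested exhaustion alternating a long interval with that interval plus $o(\abs{\Lambda})$ isolated sites; the next interval must swallow those sites, as in $I_k\subset A_k\subset I_{k+1}$. It also uses the fact that the sample term depends only on the two boundary bonds, and the ring formula $\abs{D_N}=2\min_i\abs{J_i}$.

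Two computations need correcting. First, the constant you flagged is wrong. The pinned chain has $n$ free spins, $n+1$ bonds, and the gauge constraint $\prod_b\tau_b=1$, so
$Z^+=\tfrac12\bigl[\prod_b2\cosh(\beta b)+\prod_b2\sinh(\beta b)\bigr]=2^n\prod_b\cosh(\beta b)\bigl(1+\prod_b\tanh(\beta b)\bigr)$.
Dividing by $Z^{\mathrm{free}}=2^n\prod_a\cosh(\beta J_a)$ gives
$-\beta\Delta F=\log\cosh(\beta K_\ell)+\log\cosh(\beta K_r)+\log\bigl(1+\prod_b\tanh(\beta b)\bigr)$,
with no factor $2$. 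This is \eqref{eq:chain-ratio}. At $n=1$ it reproduces your own single-site formula, which your version contradicts by $\log2$.

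With the correct constant and $x$ equal to $\beta$ times the Rademacher magnitude, the per-edge limits of $-\beta\,\E\Delta F/\abs{B}$ are $\log\cosh x$ and $\log\cosh x+\tfrac14\log(1-\tanh^4x)=\tfrac14\log\cosh 2x$. They differ exactly by the per-edge share of the order-one nonlocal term, $\tfrac12\E\log\bigl(1+\tanh(\beta K_\ell)\tanh(\beta K_r)\bigr)=\tfrac14\log(1-\tanh^4x)$. This is your $\tanh^2$ mechanism, and it is equivalent to the paper's $\cosh^4x-\cosh2x=(\cosh^2x-1)^2>0$. With your constant, the separation would instead be produced by the spurious $\log2$.

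Second, in item 2 with intervals $\{1,\dots,L\}$ in one environment, the left bond $K_\ell$ is shared by all $L$. So the sample terms are not asymptotically independent, as you claim. Argue with differences instead: $Y_L-Y_{L'}=\log\cosh(\beta K_{r,L})-\log\cosh(\beta K_{r,L'})+o(1)$ has a fixed nondegenerate law for $L\neq L'$, so the sequence is not Cauchy in probability. Alternatively, use the paper's centered intervals $[-n,n]$, whose boundary pairs are disjoint and i.i.d. across $n$.

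Your witness for item 4 fails. The periodic ring has one more bond $J_{\mathrm w}$ than the free chain, so $Z^{\mathrm{per}}/Z^{\mathrm{free}}=\cosh(\beta J_{\mathrm w})(1+Q_L)$. Hence $F^{\mathrm{per}}-F^{\mathrm{free}}$ has mean tending to $-\beta^{-1}\E\log\cosh(\beta J)\neq0$. There is a boundary constant, and per bond it equals the fixed exterior-spin constant, so this comparison shows agreement rather than incompatibility.

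Two contrasts do work:
\begin{itemize}
\item Periodic versus antiperiodic has $\E D_N=0$ exactly by sign-flip symmetry, and $D_N=\tfrac2\beta\operatorname{artanh}Q_N\to0$ by \eqref{eq:ring-twist}. Fixed exterior spins instead give mean $-\tfrac2\beta\E\log\cosh(\beta K)+o(1)$ and, by item 2, sample fluctuations that do not vanish.
\item Hard fixing of the two end spins gives $Z^{\mathrm{hard}}/Z^{\mathrm{free}}=\tfrac14\bigl(1+\prod_a\tanh(\beta J_a)\bigr)$. That is, $F^{\mathrm{hard}}-F^{\mathrm{free}}=\tfrac{2\log2}{\beta}+o(1)$: the opposite sign from $\Delta F\leq0$, with no surviving disorder fluctuations.
\end{itemize}
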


\begin{theorem}[Surface free energy in the Dobrushin uniqueness regime]
\label{thm:surface}
Assume regime (D), $d\geq2$, and the cube sequence $\LambdaL$. There is a
constant $\tau(\beta)$ such that, for every finite $p$,
\begin{equation}\label{eq:surface-conv-main}
  \frac{\E\Delta F_L}{b_L}
  \xrightarrow[L\to\infty]{}\tau(\beta),
  \qquad
  \frac{\Delta F_L}{b_L}
  \xrightarrow[L\to\infty]{\mathrm{a.s.}}\tau(\beta),
  \qquad
  \frac{\Delta F_L}{b_L}
  \xrightarrow[L\to\infty]{L^p}\tau(\beta),
\end{equation}
for every $1\leq p<\infty$.
Let $\mathbb H=\{x\in\mathbb Z^d:x_1\geq1\}$ and let
$\langle\cdot\rangle_{\mathbb H,s}$ be its unique Gibbs state for the fixed
disorder realization, with
  boundary bonds joining the hyperplanes $x_1=1$ and $x_1=0$ multiplied by
  $s\in[0,1]$. Let $\mathbf e_1$ denote
the first coordinate vector and set $e_0=(\mathbf e_1,0)$. Define
\begin{equation}\label{eq:g-half}
  g_{\mathbb H}(s)=
  \E\!\left[K_{e_0}\langle\sigma_{\mathbf e_1}\rangle_{\mathbb H,s}\right].
\end{equation}
Then
\begin{equation}\label{eq:tau-half}
  \tau(\beta)=-\int_0^1g_{\mathbb H}(s)\dd s.
\end{equation}
If the boundary-coupling law has variance $v_K$, then
$-\beta v_K/2\leq\tau(\beta)\leq0$. Moreover, with
\begin{equation}\label{eq:Cstar}
  C_*=(2J_*)^2+
  \frac{32d(2J_*)^2}{(1-\alpha)^2(1-\alpha^2)},
\end{equation}
the centered correction satisfies the explicit surface-order estimate
\begin{equation}\label{eq:surface-mcdiarmid}
  \Pp\!\left(\abs{\Delta F_L-\E\Delta F_L}\geq u\right)
  \leq2\exp\!\left(-\frac{2u^2}{C_*b_L}\right).
\end{equation}
For the density normalization in the question,
\begin{equation}\label{eq:requested-limit}
  L\bigl(\E f_L^{\mathrm{fix}}-\E f_L^{\mathrm{free}}\bigr)
  \longrightarrow 2d\,\tau(\beta).
\end{equation}
For rectangular boxes $\Lambda_{\mathbf L}$ whose minimum side tends to
infinity, put $\ell=\min_iL_i$ and
$b_{\mathbf L}=\abs{B(\Lambda_{\mathbf L})}$, and write
$\Delta F_{\mathbf L}:=\Delta F_{\Lambda_{\mathbf L}}$. Then
\[
  \frac{\E\Delta F_{\mathbf L}}{b_{\mathbf L}}
  \xrightarrow[\ell\to\infty]{}\tau(\beta),
  \qquad
  \frac{\Delta F_{\mathbf L}}{b_{\mathbf L}}
  \xrightarrow[\ell\to\infty]{L^p}\tau(\beta)
\]
for every finite $p$, hence also in probability. Almost-sure convergence along
a rectangle sequence additionally requires
$\sum_n\exp(-c\abs{B(\Lambda_n)})<\infty$ for every $c>0$. No result is claimed
for arbitrary van Hove sequences.
\end{theorem}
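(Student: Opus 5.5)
The plan is to interpolate in the boundary couplings. For $s\in[0,1]$, let $H_L^{(s)}$ be $H_L^{\mathrm{fix}}$ with every $K_e$ replaced by $sK_e$, and let $F_L(s)$ be the corresponding free energy. Then $F_L(0)=F_L^{\mathrm{free}}$ and $F_L(1)=F_L^{\mathrm{fix}}$, so
\[
  \Delta F_L=-\int_0^1\sum_{e\in\BL}K_e\langle\sigma_{x(e)}\rangle_{L,s}\dd s ,
\]
with exterior spins all plus. Taking expectations reduces the problem to showing that, for all but $o(b_L)$ boundary edges $e$,
\[
  \E[K_e\langle\sigma_{x(e)}\rangle_{L,s}]\to g_{\mathbb H}(s)
\]
uniformly in $s$. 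This in turn follows from two ingredients.

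\emph{Translation and reflection invariance.} After a lattice symmetry mapping the face containing $e$ to the hyperplane $x_1=1$, the law of the disorder is the same.

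\emph{Locality.} Under $\alpha<1$, the Dobrushin comparison theorem bounds the influence of the spin at distance $r$ on $\langle\sigma_x\rangle$ by $C\alpha^r$, uniformly in the disorder and in $s\in[0,1]$. This holds because scaling by $s$ keeps all couplings bounded by $J_*$. Consequently $\langle\sigma_x\rangle_{L,s}$ and $\langle\sigma_x\rangle_{\mathbb H,s}$ differ by $O(\alpha^{r})$ whenever $x$ is at distance $r$ from the other faces of the cube.

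Edges within distance $r$ of an edge or corner of the cube number $O(rL^{d-2})=o(b_L)$; this is where $d\ge2$ and cubes enter. Hence $\E\Delta F_L/b_L\to-\int_0^1 g_{\mathbb H}=\tau$. The consequence \eqref{eq:requested-limit} follows from $b_L/|\LambdaL|\cdot L=2d$. For the sign bounds, write $\langle\sigma_x\rangle_{\mathbb H,s}=\langle\sigma_x\rangle_{\mathbb H,s}^{(K_{e_0}=0)}+\text{correction}$ and use the symmetry of the law of $K_{e_0}$ together with its independence from the rest. Gaussian-type integration by parts is not available for bounded laws, so I would instead use the monotonicity of $\langle\sigma_x\rangle$ in the single field $\beta sK_{e_0}$. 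That field enters as $\tanh(\beta s K_{e_0}+h)$ with $|\tanh'|\le1$, which gives $0\le\E[K\langle\sigma\rangle]\le \beta s v_K$. Integrating over $s$ yields $-\beta v_K/2\le\tau\le0$.

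For the concentration bound \eqref{eq:surface-mcdiarmid} I would apply McDiarmid's inequality to $\Delta F_L$ as a function of all independent couplings. A boundary coupling $K_e$ changes $\Delta F_L$ by at most $2J_*$, contributing $(2J_*)^2b_L$. An internal coupling $J_a$ changes $F^{\mathrm{fix}}-F^{\mathrm{free}}$ by at most
\[
  2J_*\bigl|\langle\sigma_x\sigma_z\rangle^{\mathrm{fix}}-\langle\sigma_x\sigma_z\rangle^{\mathrm{free}}\bigr|
\]
along the interpolation in $J_a$. By Dobrushin comparison, this difference is bounded by $\sum_{e}$ of the influences of the boundary perturbation, roughly $C\alpha^{\operatorname{dist}(a,\partial)}/(1-\alpha)$ summed over boundary sites. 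Summing the squares over internal edges gives a geometric series in the distance to the boundary times the number of boundary sites, which produces the second term of $C_*$ times $b_L$. I expect this bookkeeping, in particular getting the two-point-function difference estimate with constant $32d/((1-\alpha)^2(1-\alpha^2))$, to be the main technical obstacle.

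The remaining convergence statements then follow. With $u=\epsilon b_L$, the bound \eqref{eq:surface-mcdiarmid} gives
\[
  \Pp\bigl(|\Delta F_L-\E\Delta F_L|\ge \epsilon b_L\bigr)\le 2e^{-2\epsilon^2 b_L/C_*}.
\]
This is summable over $L$ since $b_L\asymp L^{d-1}$ and $d\ge2$, so Borel--Cantelli gives almost-sure convergence. The sub-Gaussian tail also yields uniform bounds on all moments of $(\Delta F_L-\E\Delta F_L)/b_L$, which tend to zero, giving $L^p$ convergence.

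For rectangles the same argument applies. The fraction of boundary edges within distance $r$ of a codimension-two edge is $O(r/\ell)$, so the expectation and $L^p$ claims hold. The almost-sure statement uses exactly the assumed summability $\sum_n\exp(-c|B(\Lambda_n)|)<\infty$ in Borel--Cantelli.
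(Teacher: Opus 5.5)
Your proposal is correct and follows essentially the same route as the paper: the linear interpolation in the boundary couplings, and a Dobrushin comparison of the cube with the conditioned half-space state away from the codimension-two edges. The paper also uses an independent-copy identity where you symmetrize $K\mapsto -K$ to get $0\le g_{\mathbb H}(s)\le\beta s v_K$, and McDiarmid with internal-bond sensitivities $8J_*\alpha^r/(1-\alpha)$ over at most $2db_L$ edges per depth layer. The constant you flag as the main obstacle comes out directly from the row-sum form of the comparison bound, $\sum_j\mathcal D_{ij}b_j\le\alpha^r/(1-\alpha)$, combined with $\sum_i\delta_i(\sigma_x\sigma_z)=4$ and the geometric sum $\sum_r\alpha^{2r}=1/(1-\alpha^2)$.
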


\begin{theorem}[Variance bound for seam-flip free-energy differences]
\label{thm:seam}
Let $S_L$ be a set of independent symmetric seam bonds of variance $v$.
Let the twisted system be obtained from the untwisted system by replacing
$y_e$ with $-y_e$ for every $e\in S_L$, with all other disorder held fixed.
For the same-disorder free-energy difference
$D_L=F_L^{\mathrm{tw}}-F_L^{\mathrm{untw}}$,
\begin{equation}\label{eq:seam-summary}
  \E[D_L\mid\text{non-seam disorder}]=0,
  \quad \E D_L^2\leq4v\abs{S_L},
  \quad \abs{D_L}\leq2\sum_{e\in S_L}\abs{y_e}.
\end{equation}
If the seam variables are Gaussian, then
\begin{equation}\label{eq:seam-tail-summary}
  \Pp(\abs{D_L}\geq u)
  \leq2\exp\!\left(-\frac{u^2}{8v\abs{S_L}}\right).
\end{equation}
Consequently, if $\abs{S_L}=c_S L^{d-1}+o(L^{d-1})$ for some $c_S>0$ and
$(\E D_L^2)^{1/2}=L^{\theta_{\mathrm{rms}}+o(1)}$, then
\begin{equation}\label{eq:theta-upper}
  \theta_{\mathrm{rms}}\leq\frac{d-1}{2}.
\end{equation}
All constants are independent of $\beta$, and the conclusion also holds for
ground-state energies. These hypotheses do not imply a matching lower bound.
\end{theorem}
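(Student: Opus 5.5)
The statement to prove is \cref{thm:seam}. The plan is to write the Hamiltonian as linear in each seam coupling, $H(\sigma;y)=-\sum_e y_e\,\phi_e(\sigma)+(\text{non-seam terms})$, with $\abs{\phi_e}\leq1$; in the nearest-neighbour case $\phi_e=\sigma_x\sigma_z$. The argument then has three steps: a Lipschitz bound, an exact symmetry argument for the conditional mean, and a Gaussian concentration argument.

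\emph{Step 1: Lipschitz bound.} Fix all non-seam disorder and let $y_S=(y_e)_{e\in S_L}$. Define the map $G(y_S):=F_L(y_S)$. Since $\partial_{y_e}F=-\langle\phi_e\rangle$, which lies in $[-1,1]$, the map $G$ is $1$-Lipschitz in each coordinate. This holds uniformly in $\beta$. At $\beta=\infty$ the same bound holds for the ground-state energy, because it is a minimum of affine functions of $y$ with slopes in $[-1,1]$.

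The twisted free energy is $F^{\mathrm{tw}}=G(-y_S)$. The path from $y_S$ to $-y_S$ changes coordinate $e$ by $2\abs{y_e}$, so
\[
\abs{D_L}=\abs{G(-y_S)-G(y_S)}\leq 2\sum_{e\in S_L}\abs{y_e}.
\]

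\emph{Step 2: conditional mean zero.} The seam variables are independent and symmetric, so $y_S$ and $-y_S$ have the same law given the non-seam disorder. Hence $\E[G(-y_S)\mid\cdot]=\E[G(y_S)\mid\cdot]$, which is the first identity in \eqref{eq:seam-summary}.

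\emph{Step 3: variance bound.} Conditionally on the non-seam disorder, write
\[
D_L=\bigl(G(-y_S)-\E G\bigr)-\bigl(G(y_S)-\E G\bigr).
\]
By $(a-b)^2\leq 2a^2+2b^2$ and equality in law of $G(-y_S)$ and $G(y_S)$, we get $\E D_L^2\leq 4\Var(G(y_S)\mid\cdot)$.

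The Efron--Stein inequality with an independent copy $y_e'$ gives
\[
\Var(G\mid\cdot)\leq\tfrac12\sum_{e}\E\bigl(G(y)-G(y^{(e)})\bigr)^2\leq\tfrac12\sum_{e}\E(y_e-y_e')^2=\sum_{e}v=v\abs{S_L}.
\]
This yields $\E D_L^2\leq 4v\abs{S_L}$. I expect this constant-tracking to be the only delicate point; the Efron--Stein route gives exactly $4v$, whereas a crude bound through Step 1 would give something worse.

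\emph{Gaussian tail.} Here $D_L=\Phi(y_S)$ with $\Phi(y)=G(-y)-G(y)$. The gradient satisfies
\[
\abs{\nabla\Phi}^2=\sum_e\bigl(\langle\phi_e\rangle_{-y}+\langle\phi_e\rangle_y\bigr)^2\leq4\abs{S_L},
\]
so $\Phi$ is Lipschitz with constant $2\abs{S_L}^{1/2}$ for the Euclidean norm. Writing $y_S=\sqrt v\,g$ with $g$ standard Gaussian, the function $g\mapsto\Phi(\sqrt v\,g)$ is $2\sqrt{v\abs{S_L}}$-Lipschitz. Gaussian concentration (Tsirelson--Ibragimov--Sudakov) for a function with conditional mean $0$ then gives
\[
\Pp(\abs{D_L}\geq u)\leq2\exp\bigl(-u^2/(8v\abs{S_L})\bigr).
\]
Integrating over the non-seam disorder gives \eqref{eq:seam-tail-summary}.

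\emph{Exponent bound.} From $(\E D_L^2)^{1/2}\leq 2\sqrt{v c_S}\,L^{(d-1)/2}(1+o(1))$, taking logarithms and dividing by $\log L$ gives $\theta_{\mathrm{rms}}\leq(d-1)/2$. None of the constants above depends on $\beta$. The absence of a matching lower bound is illustrated by the examples in \cref{thm:impossibility}.
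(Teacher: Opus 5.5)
Your proof is correct and follows the paper's argument: the coordinatewise $1$-Lipschitz property of the free energy, invariance of the seam law under the flip for the conditional mean, Efron--Stein for the second moment, and Gaussian concentration using $\abs{\partial_{y_e}D_L}\leq2$. The only variation is in the second-moment step. You apply Efron--Stein to the untwisted free energy $G$ and then combine $(a-b)^2\leq2a^2+2b^2$ with $G(-y_S)\stackrel{d}{=}G(y_S)$, whereas the paper applies Efron--Stein directly to $D_L$ via $\E D_L^2=\E\Var_y(D_L\mid\omega)$ and $\abs{D_L(y)-D_L(y^{(e)})}\leq2\abs{y_e-y_e'}$; both routes give exactly $4v\abs{S_L}$.
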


\section{Normalization and random versus averaged quantities}

By \eqref{eq:counts} and \eqref{eq:Delta-def}, for either sample or expected
free energies,
\begin{equation}\label{eq:normalization}
  L(f_L^{\mathrm{fix}}-f_L^{\mathrm{free}})
  =\frac{\Delta F_L}{L^{d-1}}
  =2d\,\frac{\Delta F_L}{b_L}.
\end{equation}
Thus a total boundary correction of order $L^{d-1}$ has an order-one surface
free-energy density. The normalization by $L^{d-1}$ differs from the
normalization by the number of boundary bonds by the factor $2d$.

It is useful to distinguish the following four quantities:
\begin{center}
\small
\begin{tabularx}{0.96\textwidth}{>{\raggedright\arraybackslash}p{0.25\textwidth}
  >{\raggedright\arraybackslash}p{0.23\textwidth}X}
\toprule
Object & Deterministic? & Possible scale and interpretation \\
\midrule
$\E\Delta F_L$ & Yes, by definition & Disorder-averaged free-to-fixed boundary free-energy difference; it may be of order $L^{d-1}$. \\
$\Delta F_L-\E\Delta F_L$ & No & Sample fluctuation, containing local boundary noise and possibly bulk-mediated response. \\
$D_L=F_L^{\mathrm{tw}}-F_L^{\mathrm{untw}}$ & No & Seam-flip difference in the same disorder realization; a domain-wall observable for which the mean cancels by symmetry. \\
$\Var D_L$ or $\E\abs{D_L}$ & Yes & Different definitions of ``typical size''; they need not have identical exponents without moment control. \\
\bottomrule
\end{tabularx}
\end{center}

When it exists, define the root-mean-square stiffness exponent
$\theta_{\mathrm{rms}}$ by
\begin{equation}\label{eq:theta-definition}
  \norm{D_L}_2=(\E D_L^2)^{1/2}
  =L^{\theta_{\mathrm{rms}}+o(1)}.
\end{equation}
Other notions, such as a median exponent or convergence of a rescaled law,
must be stated separately. Under the root-mean-square definition,
\begin{equation}\label{eq:theta-density}
  \norm{\frac{D_L}{b_L}}_2
  =L^{\theta_{\mathrm{rms}}-(d-1)+o(1)},
  \qquad
  \norm{L(f_L^{\mathrm{tw}}-f_L^{\mathrm{untw}})}_2
  =L^{\theta_{\mathrm{rms}}-d+1+o(1)}.
\end{equation}
A mean boundary contribution $\tau b_L$ and centered fluctuations of order
$L^{\theta_{\mathrm{rms}}}$ may coexist. If
$\theta_{\mathrm{rms}}<d-1$, the surface free-energy density can converge to
$\tau$ while the unnormalized correction still fluctuates.

There is a useful uniform-integrability bound. For
$W_L=\sum_{e\in\BL}\abs{K_e}$, the finite-volume comparison proved below gives
$\abs{\Delta F_L}\leq W_L$. If $p\geq1$ and $\E\abs{K}^p<\infty$, Jensen's
inequality gives
\begin{equation}\label{eq:UI}
  \E\left\lvert\frac{\Delta F_L}{L^{d-1}}\right\rvert^p
  \leq(2d)^p\E\abs{K}^p.
\end{equation}
For merely integrable $K$, de la Vall\'{e}e Poussin's criterion applied to the
boundary averages gives uniform integrability. Hence passage from convergence
in probability to convergence of means is legitimate whenever the stated
mode and moment assumptions supply it; it is not automatic from an
$O(L^{d-1})$ estimate alone.

\section{Boundary interpolation identities}
\label{sec:interpolation}

The calculations below use the finite-volume Gaussian interpolation and
integration-by-parts framework standard in spin-glass theory; compare
\cite{GuerraToninelli2002}. All identities needed here are derived explicitly.

\subsection{The proposed interpolation}

Write $\mu=\E\abs{K_e}$ and $q_e=\sigma_x\psi_y$. The proposed boundary
coefficient is
\[
  h_e(t)=\sqrt t\,K_e+(1-\sqrt t)\mu
        =\mu+\sqrt t(K_e-\mu).
\]
Therefore its endpoints are
\begin{align*}
  \widetilde H_0&=H_L^{\mathrm{free}}-\mu\sum_{e\in\BL}q_e,\\
  \widetilde H_1&=H_L^{\mathrm{free}}-\sum_{e\in\BL}K_eq_e.
\end{align*}
The first endpoint is a deterministic boundary field of strength $\mu$, not the free
Hamiltonian. Consequently this path does not compute
$F_L^{\mathrm{fix}}-F_L^{\mathrm{free}}$.

Let $\widetilde\Phi_L(t)=\E\log\widetilde Z_{L,t}$ and
$m_e(t)=\langle q_e\rangle_t$. For $t>0$, finite-volume differentiation gives
\begin{equation}\label{eq:bad-raw}
  \widetilde\Phi_L'(t)
  =\frac{\beta}{2\sqrt t}
    \sum_{e\in\BL}\E\bigl[(K_e-\mu)m_e(t)\bigr].
\end{equation}
If the $K_e$ are i.i.d. $N(0,\kappa^2)$, Gaussian integration by parts and
$\partial_{K_e}m_e=\beta\sqrt t(1-m_e^2)$ yield
\begin{equation}\label{eq:bad-derivative}
  \widetilde\Phi_L'(t)
  =\frac{\beta^2\kappa^2}{2}
      \sum_{e\in\BL}\E(1-m_e(t)^2)
   -\frac{\beta\mu}{2\sqrt t}
      \sum_{e\in\BL}\E m_e(t).
\end{equation}
In general $m_e(0)\neq0$ because $\widetilde H_0$ already contains a boundary
field. The second term is then of order $t^{-1/2}$. It is integrable
on $(0,1)$, but the derivative need not have a finite right limit or a sign.
Thus the derivative may have neither a finite right limit nor a fixed sign.
More importantly, this path compares the random boundary condition with a
deterministic boundary field rather than with the free condition.

\subsection{Gibbs covariance identities}

For any $t$-independent observable $A$ and a differentiable Hamiltonian,
\begin{equation}\label{eq:gibbs-t-derivative}
  \partial_t\langle A\rangle_t
  =-\beta\Cov_t(A,\partial_tH_t).
\end{equation}
For the Gaussian interpolation defined below,
\begin{equation}\label{eq:gibbs-K-derivative}
  \partial_{K_b}\langle A\rangle_t
  =\beta\sqrt t\,\Cov_t(A,q_b),
\end{equation}
so that
\[
  \partial_{K_b}m_e(t)
  =\beta\sqrt t\bigl(\langle q_eq_b\rangle_t-m_e(t)m_b(t)\bigr).
\]
The cross-covariances must be retained for correlated Gaussian boundary
disorder. They disappear in the i.i.d. formula only because the disorder
covariance is diagonal.

\subsection{Gaussian boundary interpolation}

Define
\begin{equation}\label{eq:correct-path}
  H_{L,t}=H_L^{\mathrm{free}}
       -\sqrt t\sum_{e\in\BL}K_eq_e,
  \qquad t\in[0,1],
\end{equation}
and $\Phi_L(t)=\E\log Z_{L,t}$. Now $H_{L,0}=H_L^{\mathrm{free}}$ and
$H_{L,1}=H_L^{\mathrm{fix}}$. For $t>0$,
\begin{equation}\label{eq:correct-raw}
  \Phi_L'(t)=\frac{\beta}{2\sqrt t}
      \sum_{e\in\BL}\E[K_em_e(t)].
\end{equation}
If $K$ is centered Gaussian with covariance matrix $\Gamma$, integration by parts
and \eqref{eq:gibbs-K-derivative} give the identity
\begin{equation}\label{eq:correlated-gaussian}
  \Phi_L'(t)=\frac{\beta^2}{2}
  \sum_{e,b\in\BL}\Gamma_{eb}\,
  \E\Cov_t(q_e,q_b).
\end{equation}
Both $\Gamma$ and the Gibbs covariance matrix are positive semidefinite, so the
right side is one half of $\beta^2$ times the trace of a product of two
positive semidefinite matrices and is nonnegative. For i.i.d. variance
$\kappa^2$ this reduces to
\begin{equation}\label{eq:gaussian-diagonal-derivative}
  \Phi_L'(t)=\frac{\beta^2\kappa^2}{2}
  \sum_{e\in\BL}\E\bigl[1-m_e(t)^2\bigr].
\end{equation}

All limit interchanges here are finite-volume statements with explicit
domination. For $t>0$, the raw derivative satisfies the integrable bound
\[
  \abs{\Phi_L'(t)}
  \leq\frac{\beta}{2\sqrt t}\sum_{e\in\BL}\E\abs{K_e}.
\]
At the endpoint,
\begin{equation}\label{eq:continuity-zero}
  \abs{\log Z_{L,t}-\log Z_{L,0}}
  \leq\beta\sqrt t\sum_{e\in\BL}\abs{K_e},
\end{equation}
which proves continuity in $L^1$ at zero. For correlated Gaussian disorder,
the right side of \eqref{eq:correlated-gaussian} is bounded at fixed $L$ by a
constant depending only on $\beta$ and $\Gamma$. In the i.i.d. case,
\eqref{eq:gaussian-diagonal-derivative} is bounded by
$\beta^2\kappa^2b_L/2$. Thus $\Phi_L$ is absolutely continuous on $[0,1]$,
and integration is justified. In the i.i.d. case this gives
\begin{equation}\label{eq:gaussian-Delta}
  \E\Delta F_L
  =-\frac{\beta\kappa^2}{2}\sum_{e\in\BL}
    \int_0^1\E\bigl[1-m_e(t)^2\bigr]\dd t.
\end{equation}
In particular,
\begin{equation}\label{eq:gaussian-bound}
  -\frac{\beta\kappa^2}{2}b_L
  \leq\E\Delta F_L\leq0.
\end{equation}
If $q_e^{(1)}$ and $q_e^{(2)}$ denote the observables in two conditionally
independent Gibbs replicas, then
$m_e^2=\langle q_e^{(1)}q_e^{(2)}\rangle_t$ gives the two-replica
representation. Formula \eqref{eq:gaussian-Delta} concerns the
disorder average; convergence of its boundary average is a separate question.

\subsection{Independent-copy identity for non-Gaussian disorder}

No Gaussian integration by parts is needed for a finite-variance centered
boundary-coupling law. Let $K_e'$ be an independent copy, condition on all other
coordinates, and regard $m_e(x)$ as a function of the $e$th coupling. Then
\begin{align}
  \E[K_em_e(K_e)]
  &=\frac12\E\bigl[(K_e-K_e')
                  (m_e(K_e)-m_e(K_e'))\bigr] \notag\\
  &=\frac{\beta\sqrt t}{2}\E\!\left[(K_e-K_e')^2
       \int_0^1\bigl(1-m_{e,u}^2\bigr)\dd u\right],
       \label{eq:independent-copy}
\end{align}
where $m_{e,u}$ is evaluated with the $e$th coupling
$K_e'+u(K_e-K_e')$. Substitution in \eqref{eq:correct-raw} yields
\begin{equation}\label{eq:nongaussian-derivative}
  \Phi_L'(t)=\frac{\beta^2}{4}\sum_{e\in\BL}
  \E\!\left[(K_e-K_e')^2
       \int_0^1(1-m_{e,u}^2)\dd u\right].
\end{equation}
Thus $0\leq\Phi_L'(t)\leq\beta^2v_Kb_L/2$ and
$-\beta v_Kb_L/2\leq\E\Delta F_L\leq0$.

If only $\E\abs K<\infty$ is assumed, use the nonsingular linear path
\begin{equation}\label{eq:linear-path}
  H_{L,s}=H_L^{\mathrm{free}}-s\sum_{e\in\BL}K_eq_e,
  \qquad
  \frac{\dd}{\dd s}\E\log Z_{L,s}
  =\beta\sum_{e\in\BL}\E[K_e\langle q_e\rangle_s].
\end{equation}
The derivative is dominated by $\beta b_L\E\abs K$. This is the path used in
the surface-limit proof.

\subsection{Pointwise comparison and temperature-uniform concentration}

Let $V_L^\partial(\sigma)=\sum_{e\in\BL}K_eq_e(\sigma)$. The free Gibbs measure is
invariant under the global flip $\sigma\mapsto-\sigma$. Therefore
\begin{equation}\label{eq:ratio-cosh}
  \frac{Z_L^{\mathrm{fix}}}{Z_L^{\mathrm{free}}}
  =\left\langle e^{\beta V_L^\partial}\right\rangle_{\mathrm{free}}
  =\left\langle\cosh(\beta V_L^\partial)\right\rangle_{\mathrm{free}}.
\end{equation}
Since $1\leq\cosh(\beta V_L^\partial)\leq
e^{\beta\sum_e\abs{K_e}}$,
\begin{equation}\label{eq:sample-bound}
  -\sum_{e\in\BL}\abs{K_e}\leq\Delta F_L\leq0.
\end{equation}
Without global spin-flip symmetry only the absolute comparison
$\abs{\Delta F_L}\leq\sum_e\abs{K_e}$ remains valid.

If all internal and boundary bonds are independent $N(0,\kappa^2)$, then
\begin{align*}
  \abs{\partial_{J_a}\Delta F_L}&\leq2,
       &&a\in\EL,\\
  \abs{\partial_{K_e}\Delta F_L}&\leq1,
       &&e\in\BL.
\end{align*}
The Gaussian Poincar\'{e} inequality and Gaussian concentration
\cite{Ledoux2001} consequently give, for every $\beta$,
\begin{align}
  \Var(\Delta F_L)
  &\leq\kappa^2(4\abs{\EL}+b_L)
   =\kappa^2d(4L-2)L^{d-1},
   \label{eq:generic-var}\\
  \Pp\bigl(\abs{\Delta F_L-\E\Delta F_L}\geq u\bigr)
  &\leq2\exp\!\left[
     -\frac{u^2}{2\kappa^2(4\abs{\EL}+b_L)}\right].
   \label{eq:generic-tail}
\end{align}
Hence
\begin{equation}\label{eq:generic-surface-var}
  \Var\!\left(\frac{\Delta F_L}{L^{d-1}}\right)
  \leq\frac{\kappa^2d(4L-2)}{L^{d-1}}
  =O(L^{2-d}).
\end{equation}
For $d\geq3$, if the normalized means
$\E\Delta F_L/L^{d-1}$ converge, the sample surface variables converge to the
same limit in $L^2$ and almost surely; the tail in \eqref{eq:generic-tail} is
summable. For $d=2$ this estimate does not prove self-averaging. In every
dimension, concentration alone does not imply convergence of the expectations.

\section{High-temperature surface limit}

\subsection{Dobrushin comparison and the half-space state}

For an observable $A$ on a finite spin space and a site $i$, define its
single-site oscillation by
\[
  \delta_i(A)=\sup\{\abs{A(\sigma)-A(\sigma')}:
  \sigma_j=\sigma_j'\text{ for every }j\neq i\}.
\]
We use the following standard form of Dobrushin's comparison theorem
\cite{Georgii2011}.

\begin{lemma}[Dobrushin comparison estimate]\label{lem:dobrushin}
Let $\mu$ and $\nu$ be Gibbs measures on the same finite site set. Suppose
their single-site specifications have a common influence matrix $\mathcal C$
with nonnegative entries and row sums at most $\alpha<1$. Assume that
$\mathcal C_{ij}=0$ unless $i$ and $j$ are neighbors in the underlying graph. Let
$\mathcal D=(I-\mathcal C)^{-1}$, and let $b_j\leq1$ be a uniform bound on the
total-variation distance between the two single-site specifications at $j$.
Then
\begin{equation}\label{eq:dob-general}
  \abs{\mu(A)-\nu(A)}
  \leq\sum_i\delta_i(A)\sum_j\mathcal D_{ij}b_j.
\end{equation}
If, for an integer $r\geq1$, the specifications agree on the closed
radius-$(r-1)$ neighborhood of $\operatorname{supp}A$, then
\begin{equation}\label{eq:dob-comparison}
  \abs{\mu(A)-\nu(A)}
  \leq\frac{\alpha^r}{1-\alpha}
       \sum_{i\in\operatorname{supp}A}\delta_i(A).
\end{equation}
\end{lemma}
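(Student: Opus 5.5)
The plan is to prove the general estimate \eqref{eq:dob-general} by the classical Dobrushin recursion on oscillation vectors, and then read off \eqref{eq:dob-comparison} from a Neumann-series expansion of $\mathcal D$.

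\textbf{Step 1: set up the comparison vector.} Write $\gamma_i^\mu$ and $\gamma_i^\nu$ for the single-site conditional laws at $i$. For an observable $A$, consider the vector $\delta(A)=(\delta_i(A))_i$. The key one-step estimate is the following: if $\gamma_j^\mu$ is applied to $A$ in the $j$th coordinate, the resulting function $\gamma_j^\mu A$ satisfies
\[
  \delta_i(\gamma_j^\mu A)\leq\delta_i(A)+\mathcal C_{ij}\delta_j(A)
  \quad (i\neq j),
  \qquad \delta_j(\gamma_j^\mu A)=0 .
\]
This is just the definition of the influence matrix $\mathcal C$ as a bound on total-variation dependence of $\gamma_j$ on $\sigma_i$.

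\textbf{Step 2: iterate the sweeps.} Use $\mu=\mu\gamma_j^\mu$ for each $j$. This gives
\[
  \abs{\mu(A)-\nu(A)}\leq\abs{\mu(\gamma_j^\mu A)-\nu(\gamma_j^\mu A)}+\delta_j(A)b_j,
\]
because $\abs{\nu(\gamma_j^\mu A)-\nu(\gamma_j^\nu A)}\leq b_j\delta_j(A)$ and $\nu=\nu\gamma_j^\nu$. Sweep repeatedly through all sites. After $n$ full sweeps, the accumulated error terms are bounded by $\sum_i\delta_i(A)\sum_{k<n}\sum_j(\mathcal C^k)_{ij}b_j$. The residual oscillation vector is bounded by $\mathcal C^n\delta(A)$ entrywise, and the row sums $\leq\alpha<1$ make it tend to zero. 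The residual term is also bounded by the total oscillation, so it vanishes as $n\to\infty$. This yields \eqref{eq:dob-general} with $\mathcal D=\sum_k\mathcal C^k$.

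The bookkeeping of this sweep order is the main technical obstacle: oscillations transfer to neighbors only through $\mathcal C$. Since the site set is finite, I would follow the standard argument in \cite{Georgii2011} rather than reinvent it, citing it for the precise ordering.

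\textbf{Step 3: derive \eqref{eq:dob-comparison}.} Assume the specifications agree on the closed radius-$(r-1)$ neighborhood $N$ of $\operatorname{supp}A$. Then $b_j=0$ for $j\in N$, and $b_j\leq1$ otherwise. Because $\mathcal C$ is supported on neighboring pairs, $(\mathcal C^k)_{ij}=0$ whenever $\operatorname{dist}(i,j)>k$. So for $i\in\operatorname{supp}A$ and $j\notin N$, only terms with $k\geq r$ contribute. Hence
\[
  \sum_j\mathcal D_{ij}b_j\leq\sum_{k\geq r}\sum_j(\mathcal C^k)_{ij}\leq\sum_{k\geq r}\alpha^k=\frac{\alpha^r}{1-\alpha},
\]
using that the row sums of $\mathcal C^k$ are at most $\alpha^k$. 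Finally, $\delta_i(A)=0$ for $i\notin\operatorname{supp}A$. Substituting into \eqref{eq:dob-general} gives \eqref{eq:dob-comparison}.
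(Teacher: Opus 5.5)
Your route is the paper's. The paper proves \eqref{eq:dob-general} by invoking Dobrushin's comparison theorem, which is what your Step 2 ultimately does by deferring to \cite{Georgii2011}. Your Step 3 is exactly the paper's Neumann-series argument, with the locality fact $(\mathcal C^k)_{ij}=0$ for $\operatorname{dist}(i,j)>k$ made explicit.

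\textbf{Index transposition in Steps 1--2.} You take $\mathcal C_{ij}$ to bound the dependence of $\gamma_j$ on $\sigma_i$, and you write $\delta_i(\gamma^\mu_jA)\leq\delta_i(A)+\mathcal C_{ij}\delta_j(A)$. With that convention, oscillations propagate as $\mathcal C\delta(A)$, as you yourself say for the residual. The error bound your recursion actually produces is then $\sum_{i,j}\delta_i(A)\mathcal D_{ji}b_j$, which is the transpose of \eqref{eq:dob-general}. Step 3 would then need column sums of $\mathcal C^k$, and the row-sum hypothesis does not control these when $\mathcal C$ is not symmetric.

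The statement uses the standard convention of \cite{Georgii2011}: $\mathcal C_{ij}$ bounds the dependence of the conditional law at $i$ on $\sigma_j$. The one-step estimate then reads $\delta_i(\gamma^\mu_jA)\leq\delta_i(A)+\mathcal C_{ji}\delta_j(A)$. With this, the row-sum hypothesis, the index order in \eqref{eq:dob-general}, and your Step 3 are all consistent.

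\textbf{Per-sweep accounting.} With sequential updates, a site can receive, within one sweep, oscillation transferred to it earlier in that same sweep. So after $n$ sweeps the accumulated errors need not be bounded by the $k<n$ partial sum you wrote, even for $n=1$, although the $n\to\infty$ bound is correct. A clean way to do the bookkeeping is as follows.
\begin{itemize}
\item Put $A_0=A$ and $A_m=\gamma^\mu_{j_m}A_{m-1}$.
\item Define row vectors $a^{(0)}=\delta(A)$ and $a^{(m)}=a^{(m-1)}-a^{(m-1)}_{j_m}e_{j_m}(I-\mathcal C)$.
\item Then $\delta(A_m)\leq a^{(m)}$ entrywise, and the $m$th error is at most $b_{j_m}a^{(m-1)}_{j_m}$.
\item Telescoping and right-multiplying by $\mathcal D\geq0$ gives $\sum_{m\leq n}a^{(m-1)}_{j_m}e_{j_m}\leq\delta(A)\mathcal D$ entrywise.
\end{itemize}
This yields \eqref{eq:dob-general} once you note that $a^{(n)}\to0$ along a periodic sweep. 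Indeed, the transfers $a^{(m-1)}_{j_m}$ are summable by the same identity, and each coordinate is reset to $0$ once per sweep.

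The paper only applies the lemma with the symmetric matrix $\mathcal C_{xz}=\tanh(\beta J_*)\mathbf 1_{\{x\sim z\}}$, so neither point affects anything downstream.
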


\begin{proof}
The first inequality is Dobrushin's comparison theorem. For the second, write
$\mathcal D=\sum_{n\geq0}\mathcal C^n$. Since $b_j=0$ at distance less than
$r$ from $\operatorname{supp}A$ and the row sums of $\mathcal C$ are at most
$\alpha$,
\[
  \sum_j\mathcal D_{ij}b_j
  \leq\sum_{n\geq r}(\mathcal C^n\mathbf1)_i
  \leq\frac{\alpha^r}{1-\alpha}.
\]
Substitution in \eqref{eq:dob-general} proves the claim.
\end{proof}

In regime (D) we may take
\[
  \mathcal C_{xz}=\tanh(\beta J_*)\mathbf1_{\{x\sim z\}},
  \qquad \sup_x\sum_z\mathcal C_{xz}\leq\alpha<1.
\]
Boundary interactions with prescribed exterior spins are one-site fields and
therefore do not enter $\mathcal C$. In particular, the estimate is uniform in
the interpolation parameter, the disorder realization, and the exterior
spins.

We next construct the half-space state appearing in \eqref{eq:g-half}. Take
any increasing finite exhaustion $(H_n)$ of $\mathbb H$ and impose arbitrary
exterior conditions. For $m>n$, condition the Gibbs measure on $H_m$ on the
spins in $H_m\setminus H_n$. Its conditional distribution and the Gibbs
measure on $H_n$ are then defined on the same site set. For a fixed local
observable, their specifications agree in a ball whose radius tends to
infinity with $n$, uniformly in the conditioned spins. By
\cref{lem:dobrushin}, the corresponding expectations form a uniformly Cauchy
sequence after averaging over those spins. The limit defines a Gibbs state on
$\mathbb H$, is independent of the exhaustion and exterior conditions, and is
measurable in the disorder as a pointwise limit of finite-volume expectations.
The same estimate proves uniqueness.

\subsection{Proof of the surface-limit theorem}

We now prove \cref{thm:surface}. For later use, define the inner vertex
boundary and the depth of an internal edge by
\[
  \partial_{\mathrm{in}}\Lambda
  =\{x\in\Lambda:\operatorname{dist}(x,\Lambda^c)=1\},
  \qquad
  r(a)=\operatorname{dist}(\{x,z\},\partial_{\mathrm{in}}\Lambda),
  \quad a=\{x,z\}.
\]

Pointwise finite-volume differentiation of the linear path
\eqref{eq:linear-path} gives
\begin{equation}\label{eq:linear-integrated}
  \E\Delta F_L
  =-\int_0^1\sum_{e\in\BL}
     \E[K_e\langle q_e\rangle_{L,s}]\dd s.
\end{equation}
The integrand is bounded by $J_*b_L$, so Fubini's theorem applies.

Fix the face $x_1=1$, and consider a boundary edge $e=(x,x-\mathbf e_1)$
whose tangential distance from the edges of that face is at least $R$.
Use the same disorder in the cube and the half-space. Condition the half-space
state on the spins in $\mathbb H\setminus\LambdaL$. The resulting conditional
measure and the cube measure are Gibbs measures on the same site set
$\LambdaL$. Their specifications agree on the closed radius-$(R-1)$
neighborhood of $x$: on the target face they have the same interpolated
boundary field, while discrepancies on the other faces occur at distance at
least $R$. Applying \cref{lem:dobrushin} conditionally and then averaging over
the conditioned spins gives, uniformly in $s$,
\[
  \abs{\E[K_e\langle q_e\rangle_{L,s}]-g_{\mathbb H}(s)}
  \leq\frac{4J_*}{1-\alpha}\alpha^R.
\]
Here we used the uniform bound $\sum_i\delta_i(q_e)\leq4$. The same argument
applies to the other faces. By lattice symmetries all $2d$ half-space limits
agree.

For an integer $1\leq R\leq L/2$, the fraction of boundary edges excluded by
the tangential-distance condition is at most $2(d-1)R/L$. Since each boundary
integrand has absolute value at most $J_*$, averaging over the good and bad
edges yields
\begin{equation}\label{eq:face-error}
  \sup_{s\in[0,1]}
  \left\lvert\frac1{b_L}\sum_{e\in\BL}
    \E[K_e\langle q_e\rangle_{L,s}]-g_{\mathbb H}(s)\right\rvert
  \leq4(d-1)J_*\frac RL
      +\frac{4J_*}{1-\alpha}\alpha^R.
\end{equation}
For $0<\alpha<1$, choose $R$ proportional to $\log L$; when $\alpha=0$, take
$R=1$. The right side tends to zero uniformly in $s$. Dominated convergence
in \eqref{eq:linear-integrated} proves
$\E\Delta F_L/b_L\to\tau(\beta)$ with \eqref{eq:tau-half}. The independent-copy
identity along the linear path gives, in every finite half-space exhaustion,
\[
  \E[K_e m_{e,s}(K_e)]
  =\frac{\beta s}{2}\E\!\left[(K_e-K_e')^2
       \int_0^1(1-m_{e,s,u}^2)\dd u\right],
\]
where $m_{e,s,u}$ is evaluated with the $e$th coupling
$K_e'+u(K_e-K_e')$.
The Dobrushin estimate and bounded convergence allow passage to the half-space
limit. Consequently,
\[
  0\leq g_{\mathbb H}(s)\leq\beta s v_K,
\]
which proves $-\beta v_K/2\leq\tau\leq0$.
In particular, the same explicit right side in \eqref{eq:face-error} bounds
$\abs{\E\Delta F_L/b_L-\tau(\beta)}$.

It remains to control the centered random correction. If an internal bond
$J_a$ at depth $r(a)=r$ is resampled, integrate
\[
  \sigma_a:=\sigma_x\sigma_z\quad(a=\{x,z\}),
  \qquad
  \partial_{J_a}\Delta F_L
  =-\langle\sigma_a\rangle_{\mathrm{fix}}
   +\langle\sigma_a\rangle_{\mathrm{free}}
\]
along the resampling segment. For $r\geq1$, estimate
\eqref{eq:dob-comparison} is uniform on that segment. For $r=0$, the general
estimate \eqref{eq:dob-general} gives the same bound with $\alpha^0=1$.
Therefore the coordinate oscillation is at most
\begin{equation}\label{eq:internal-sensitivity}
  c_a\leq\frac{8J_*}{1-\alpha}\alpha^r.
\end{equation}
A boundary-bond oscillation is at most $2J_*$. The vertices at a fixed depth
form a subset of the vertex boundary of a smaller cube and hence number at
most $b_L$; since each vertex is incident to at most $2d$ internal edges,
there are at most $2db_L$ internal edges at a fixed depth. Summing the squared
oscillations over the depth layers gives
\begin{equation}\label{eq:Q-bound}
  \sum_i c_i^2
  \leq b_L\left[(2J_*)^2+
    \frac{32d(2J_*)^2}{(1-\alpha)^2(1-\alpha^2)}\right]
  =C_*b_L.
\end{equation}
McDiarmid's bounded-differences inequality \cite{BLM2013} and
\eqref{eq:Q-bound} give
\eqref{eq:surface-mcdiarmid}. With
$u=\varepsilon b_L$, its right side is
$2\exp(-2\varepsilon^2b_L/C_*)$, summable for integer cubes when $d\geq2$.
Borel--Cantelli gives almost-sure convergence to the mean. Integrating the
tail gives convergence in $L^p$ for every $1\leq p<\infty$.

The same argument applies to a rectangular box
$\Lambda_{\mathbf L}=\prod_{i=1}^d\{1,\ldots,L_i\}$. With
$V=\prod_iL_i$, $\ell=\min_iL_i$, and
\[
  b_{\mathbf L}=2V\sum_{i=1}^dL_i^{-1},
\]
the fraction of bad edges on any face is at most
$2(d-1)R/\ell$. Thus the right side of \eqref{eq:face-error} remains valid
with $L$ replaced by $\ell$. At each depth the number of internal edges is at
most $2db_{\mathbf L}$, so \eqref{eq:Q-bound} holds with $b_L$ replaced by
$b_{\mathbf L}$. Taking $R$ proportional to $\log\ell$ proves convergence of
the mean and convergence in every finite $L^p$ as $\ell\to\infty$.
Almost-sure convergence follows whenever
$\sum_n\exp(-c b_{\mathbf L_n})<\infty$ for every $c>0$. This completes the
proof of \cref{thm:surface}.

\subsection{Dependence on the van Hove sequence}
\label{sec:counterexamples}

We now prove the first part of \cref{thm:impossibility} using a standard
i.i.d. symmetric one-dimensional model. Let $K_i$ be i.i.d. Rademacher variables
taking values $\pm K$, put $x=\beta K>0$, and fix all exterior spins plus.
For one interval of $n$ sites, the high-temperature expansion gives exactly
\begin{equation}\label{eq:chain-ratio}
  \log\frac{Z_n^+}{Z_n^{\mathrm{free}}}
  =X_0+X_n+
  \log\!\left(1+\prod_{i=0}^{n}T_i\right),
  \quad
  X_i=\log\cosh(\beta K_i),\quad
  T_i=\tanh(\beta K_i).
\end{equation}
For other endpoint spins the product receives their relative sign. Since
$\abs{T_i}=\tanh x<1$, the final term tends to zero uniformly. The expected
difference of log-partition functions per boundary edge therefore tends to
\begin{equation}\label{eq:mu-interval}
  \mu_\mathrm{int}=\log\cosh x.
\end{equation}

Construct a nested exhaustion recursively. Choose $N_1\geq1$ and let
$I_k=[-N_k,N_k]\cap\mathbb Z$. Given $I_k$, set
$M_k=\lfloor\sqrt{\abs{I_k}}\rfloor$ and add, to the right of $I_k$, $M_k$
isolated sites spaced three lattice units apart; call the union $A_k$. Choose
$N_{k+1}>N_k$ so that $I_{k+1}$ contains $A_k$. Then
\[
  I_1\subset A_1\subset I_2\subset A_2\subset\cdots
\]
is a nested exhaustion. Moreover,
\[
  \frac{\abs{B(I_k)}}{\abs{I_k}}\longrightarrow0,
  \qquad
  \frac{\abs{B(A_k)}}{\abs{A_k}}
  =\frac{2+2M_k}{\abs{I_k}+M_k}\longrightarrow0,
\]
so it is a van Hove sequence. The same estimate, with a constant depending on
any fixed $r$, verifies the van Hove condition stated in \cref{sec:model}.

\begin{figure}[htbp]
	\centering
	\begin{minipage}{0.48\textwidth}
		\centering
		\includegraphics[width=\textwidth]{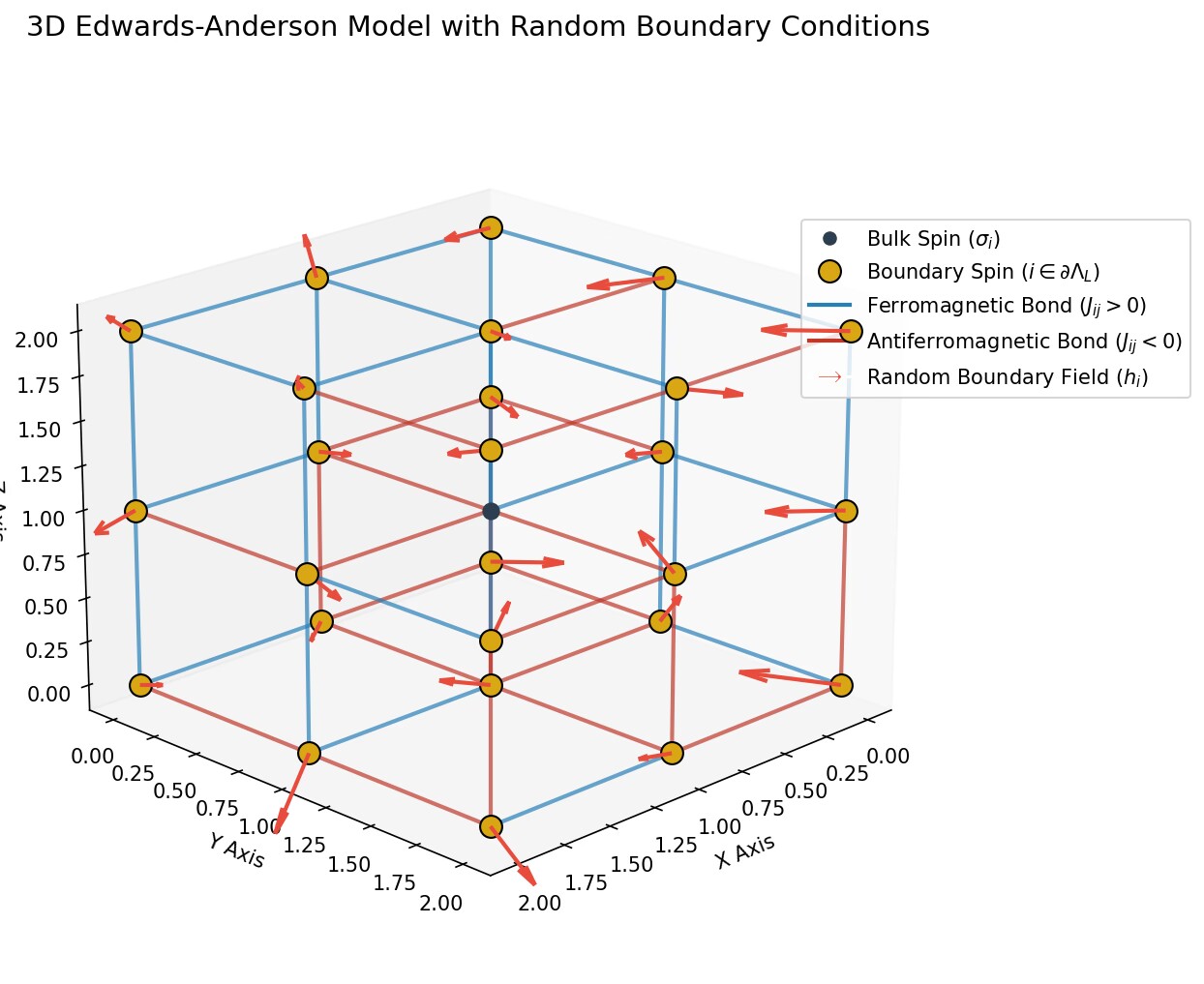}
		\caption{Preliminary topological preview of the one-dimensional nested van Hove sequence, outlining the basic geometric spacing between the intervals and isolated boundary sites.}
		\label{fig:van_hove_preview}
	\end{minipage}
	\hfill
	\begin{minipage}{0.48\textwidth}
		\centering
		\includegraphics[width=\textwidth]{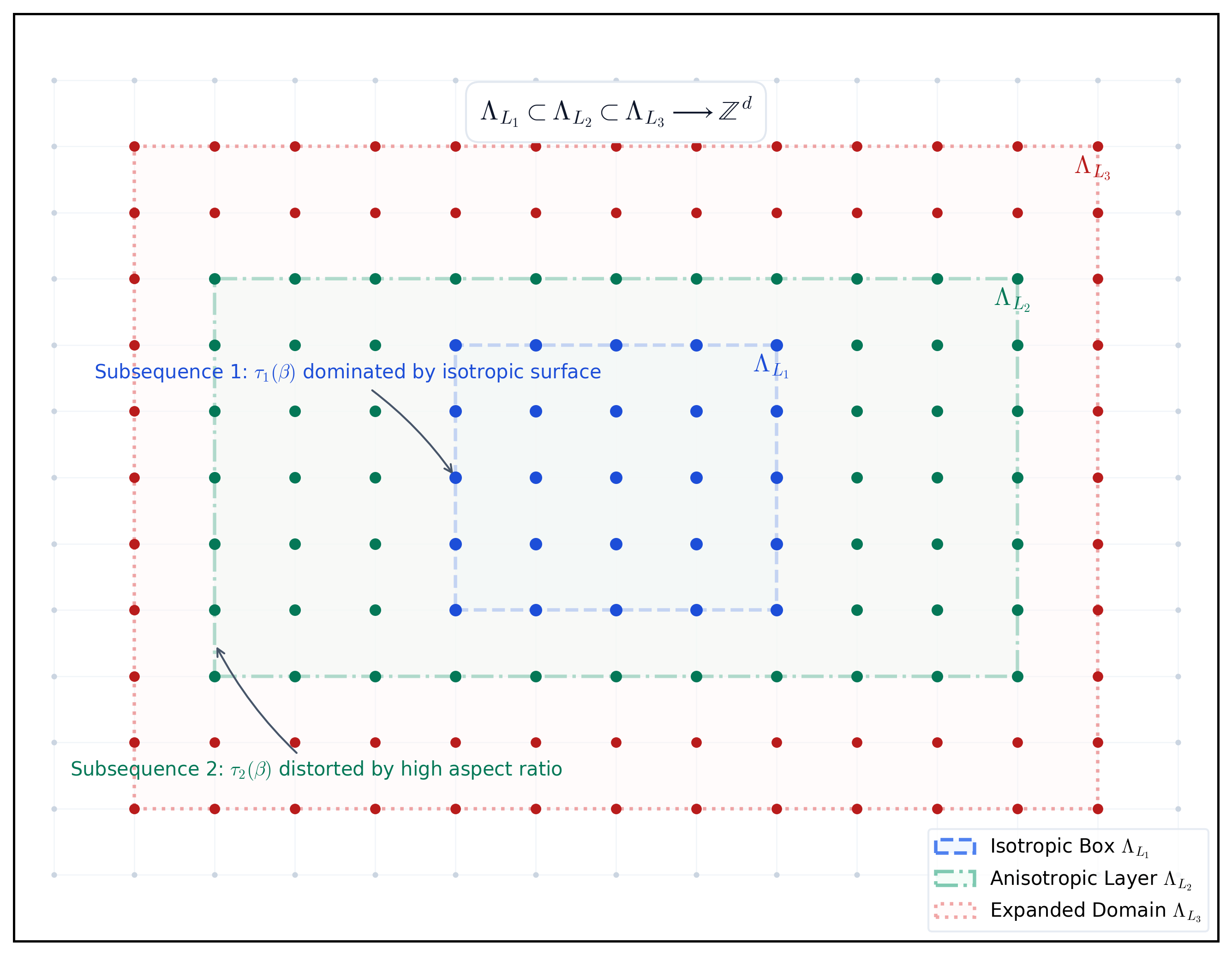}
		\caption{High-resolution diagram of the counterexample exhaustion path on $\mathbb{Z}$, demonstrating how alternating geometric densities trigger distinct subsequential limits for the expected surface free-energy term.}
		\label{fig:van_hove_highres}
	\end{minipage}
\end{figure}

For one isolated spin the partition-function ratio is
$\cosh(\beta(K_1+K_2))$. Since $K_1+K_2$ equals $0$ with probability $1/2$
and $\pm2K$ with total probability $1/2$, its expected log excess per one of
its two boundary edges is
\begin{equation}\label{eq:mu-isolated}
  \mu_\mathrm{iso}=\frac14\log\cosh(2x).
\end{equation}
The isolated components dominate the boundary of $A_k$, while the long
interval contributes only two boundary edges. Hence the expected log excess
per edge tends to $\mu_\mathrm{int}$ along $I_k$ and to $\mu_\mathrm{iso}$
along $A_k$. These constants differ because
\[
  \cosh^4x-\cosh(2x)=(\cosh^2x-1)^2>0.
\]
Multiplication by $-1/\beta$ gives two distinct free-energy surface limits.
The example can be chosen at arbitrarily small $x$, so it persists inside the
one-dimensional Dobrushin regime. This proves that arbitrary van Hove
independence is false even for bounded i.i.d. symmetric finite-range disorder.

The mechanism is microscopic boundary geometry, not frustration. In a factorized model
with no internal interactions, if $k_x$ boundary edges meet a boundary vertex,
\[
  \Delta F_\Lambda=-\frac1\beta\sum_{x\in\Lambda}
  \log\cosh\!\left(\beta\sum_{e=(x,y)\in B(\Lambda)}K_e\psi_y\right).
\]
Flat faces have $k_x=1$ almost everywhere; a microscopic staircase can have a
positive density of $k_x=2$. Division by the number of boundary edges therefore
produces different constants. Regularity assumptions on the boundary are
therefore essential.

\subsection{Failure of sample convergence in dimension one}

Take centered intervals $[-n,n]$ in a common i.i.d. environment and assume that
$X=\log\cosh(\beta K)$ is nondegenerate and integrable, for example by giving
$\abs K$ two bounded positive values. Formula \eqref{eq:chain-ratio} shows that
the sample correction is, up to a term tending to zero,
\[
  -\frac1\beta(X_{-n-1}+X_n).
\]
The boundary pairs for different $n$ are independent and identically
distributed. This sequence is not Cauchy in probability and hence has no
limit in probability or almost surely, although it converges in distribution
and its expectation converges. In $d=1$ the boundary contains only two bonds,
so no boundary averaging occurs. Thus the mode of convergence must be
specified.

\section{Domain-wall free-energy fluctuations}

\subsection{Proof of the seam variance bound}

We prove \cref{thm:seam}. Related interface free-energy fluctuation estimates
are developed in \cite{ArguinNewmanSteinWehr2014}. Write the non-seam disorder
as $\omega$ and the seam
vector as $y=(y_e)_{e\in S_L}$. Let
$F_{L,\beta}(\omega,y)$ denote the free energy of the untwisted system, and let
$T$ flip every coordinate of $y$. Put
\begin{equation}\label{eq:D-seam-def}
  D_L(\omega,y)=F_{L,\beta}(\omega,Ty)-F_{L,\beta}(\omega,y).
\end{equation}
Because the joint seam law is invariant under $T$,
\[
  D_L(\omega,Ty)=-D_L(\omega,y),
  \qquad \E_y[D_L(\omega,y)\mid\omega]=0.
\]
The free energy is one-Lipschitz in each bond: its derivative, where it exists,
is a Gibbs expectation of a $\{-1,+1\}$ observable. Replacing one $y_e$ by an
independent copy $y_e'$ changes each of the two free energies in
\eqref{eq:D-seam-def} by at most $\abs{y_e-y_e'}$. Hence
\[
  \abs{D_L(y)-D_L(y^{(e)})}\leq2\abs{y_e-y_e'}.
\]
Here $y^{(e)}$ denotes the vector obtained from $y$ by replacing $y_e$ with
$y_e'$.
The conditional Efron--Stein inequality \cite{BLM2013} gives
\begin{align}
  \E D_L^2
  &=\E\Var_y(D_L\mid\omega)\notag\\
  &\leq\frac12\sum_{e\in S_L}
     \E\bigl[4(y_e-y_e')^2\bigr]
   =4v\abs{S_L}.
   \label{eq:seam-efron}
\end{align}
The Hamiltonians differ by at most
$2\sum_{e\in S_L}\abs{y_e}$ uniformly in $\sigma$, proving the deterministic
bound in \eqref{eq:seam-summary}.

For Gaussian seam bonds, $\abs{\partial_{y_e}D_L}\leq2$. Since
$\E_y[D_L\mid\omega]=0$, conditional Gaussian concentration gives
\eqref{eq:seam-tail-summary}; integrating over $\omega$ preserves the same
bound.
If $\abs{S_L}\asymp L^{d-1}$, then
\begin{equation}\label{eq:seam-density-vanish}
  \E\left(\frac{D_L}{L^{d-1}}\right)^2
  =O(L^{-(d-1)}).
\end{equation}
When $d\geq2$, the domain-wall surface free-energy density therefore tends to
zero in $L^2$. In the Gaussian case it also converges almost surely for
$d\geq2$, and the tail estimate gives
$D_L=O(L^{(d-1)/2+\varepsilon})$ almost surely for every $\varepsilon>0$.
The root-mean-square definition \eqref{eq:theta-definition} then yields
\eqref{eq:theta-upper}.

These constants contain no $\beta$. The ground-state energy
$\min_\sigma H(\sigma)$ has the same one-bond Lipschitz property, so the proof
holds verbatim at $\beta=\infty$ and uniformly for a sequence $\beta_L$.
This does not imply that the actual finite-temperature and zero-temperature
exponents coincide.

For periodic versus antiperiodic conditions, $T$ flips the couplings on one
wrapping seam. For two exterior patterns, $T$ flips precisely the boundary
bonds on which their effective signs differ. In both cases the seam-flip mean
is zero under symmetric disorder. The free-to-fixed boundary free-energy
difference is not of this antisymmetric form and does not enjoy the seam-only
variance bound.

\subsection{A variance bound for the free-to-fixed comparison}

Combining \eqref{eq:sample-bound} and \eqref{eq:generic-var} gives, for the
centered free-to-fixed correction under i.i.d. Gaussian disorder,
\begin{equation}\label{eq:generic-theta}
  \norm{\Delta F_L-\E\Delta F_L}_{2}
  =O\!\left(\min\{L^{d/2},L^{d-1}\}\right).
\end{equation}
Thus a root-mean-square power law for this centered variable can only imply
$\theta_{\mathrm{rms}}\leq\min\{d/2,d-1\}$. This is weaker than
\eqref{eq:theta-upper} because internal disorder contributes to the general
Poincar\'{e} bound and there is no conditional sign-flip cancellation.

\subsection{Boundary fluctuations in a decoupled-chain model}

Assume $d\geq2$ and consider an anisotropic finite-range model consisting of
$L^{d-1}$ independent zero-field chains, with one random boundary bond $K_a$
at one end of each chain. Use the first $L^{d-1}$ variables of a single i.i.d.
sequence $(K_a)_{a\geq1}$ as $L$ varies. The endpoint marginal of each free
chain is uniform, so exactly
\begin{equation}\label{eq:chain-CLT}
  \Delta F_L=-\frac1\beta\sum_{a=1}^{L^{d-1}}
                  \log\cosh(\beta K_a).
\end{equation}
If $\log\cosh(\beta K)$ has nonzero finite variance, the law of large numbers
and central limit theorem give
\begin{align*}
  \frac{\Delta F_L}{L^{d-1}}
  &\xrightarrow[L\to\infty]{\mathrm{a.s.},\,L^2}
    -\frac1\beta\E\log\cosh(\beta K),\\
  \frac{\Delta F_L-\E\Delta F_L}{L^{(d-1)/2}}
  &\Longrightarrow N\!\left(0,
    \frac1{\beta^2}\Var(\log\cosh(\beta K))\right).
\end{align*}
These fluctuations arise from independent local boundary contributions rather
than from a macroscopic domain wall. A stiffness observable should therefore
use a comparison in which these local contributions cancel.

\subsection{One-dimensional periodic versus antiperiodic examples}

For a one-dimensional ring, define
$D_N=F_N^{\mathrm{aper}}-F_N^{\mathrm{per}}$. The high-temperature expansion,
with $Q_N=\prod_{i=1}^N\tanh(\beta J_i)$, gives
\begin{equation}\label{eq:ring-twist}
  D_N=\frac1\beta\log\frac{1+Q_N}{1-Q_N}
      =\frac{2}{\beta}\operatorname{artanh}Q_N.
\end{equation}
At fixed finite $\beta$, if
$\E\abs{\log\abs{\tanh(\beta J)}}<\infty$ and
$\E\log\abs{\tanh(\beta J)}<0$, the strong law of large numbers shows that
$\abs{D_N}$ decays exponentially almost surely. If $\Pp(J=0)>0$, then $Q_N$
eventually vanishes and the same conclusion holds. At zero temperature the
twist energy is
\begin{equation}\label{eq:ring-ground}
  \abs{D_N}=2\min_{1\leq i\leq N}\abs{J_i}.
\end{equation}
For Rademacher bonds this is constant and $\theta_{\mathrm{rms}}=0$. For a symmetric Laplace
law with unit rate, the minimum is exponential with rate $N$, so
$(\E D_N^2)^{1/2}=2\sqrt2/N$ and $\theta_{\mathrm{rms}}=-1$. More generally,
if $\Pp(\abs J\leq x)\sim cx^a$ near zero, standard extreme-value theory
\cite{Resnick1987} gives the
exponent $-1/a$ in probability; the same root-mean-square exponent follows
under a corresponding uniform-integrability condition. Symmetry and finite
moments alone therefore supply no universal lower bound, exponent, or limiting
law.

\section{Scope and limitations}

\subsection{Comparison of methods}

\cref{tab:methods} records the precise output of each method and the additional
input required to obtain a full surface limit.

\begin{table}[ht]
\caption{Rigorous consequences of the methods used in the paper.}
\label{tab:methods}
\small
\renewcommand{\arraystretch}{1.08}
\begin{tabularx}{\textwidth}{@{}>{\raggedright\arraybackslash}p{0.20\textwidth}
 >{\raggedright\arraybackslash}p{0.35\textwidth}X@{}}
\toprule
Method & Established conclusion & Additional input required \\
\midrule
Thermodynamic-limit methods \cite{Ruelle1969}
& Existence of the bulk pressure and an $O(b_L)$ boundary comparison
& Convergence of the $b_L$-order remainder and regularity of the microscopic boundary geometry \\
Gaussian and independent-copy interpolation \cite{GuerraToninelli2002}
& Finite-volume identities for $\E\Delta F_L$, including sign and moment bounds
& Full-sequence convergence of the normalized boundary response \\
Concentration inequalities \cite{ArguinNewmanSteinWehr2014,BLM2013,Ledoux2001}
& Upper bounds for centered fluctuations and seam-order variance estimates
& Convergence of the normalized expectations and separate lower-scale information \\
Dobrushin comparison \cite{Georgii2011}
& Unique half-space Gibbs state, exponential comparison, and the regular-box surface limit
& Uniformly bounded interactions and the strict condition $\alpha<1$ \\
Metastate compactness \cite{AizenmanWehr1990,NewmanStein1997}
& Subsequential infinite-volume Gibbs states
& Uniqueness of the induced boundary response across subsequences \\
Stochastic localization \cite{Eldan2013}
& A covariance-martingale representation
& A uniform estimate for the boundary projection of the covariance process \\
\bottomrule
\end{tabularx}
\end{table}

Contour expansions yield positive surface tension in ordered ferromagnetic
phases under Peierls-type positivity assumptions \cite{DKS1992}. Those
assumptions are not available for general frustrated couplings.

\subsection{Limit passages and normalization}

The path proposed in the question does not begin at the free Hamiltonian,
whereas \eqref{eq:correct-path} and \eqref{eq:linear-path} have the required
endpoints. At finite volume, differentiation is analytic for $t>0$; bounds
\eqref{eq:continuity-zero} and \eqref{eq:gaussian-diagonal-derivative} give
continuity and absolute continuity at $t=0$. The term of order $t^{-1/2}$ in
the proposed path is integrable but need not cancel. Formula
\eqref{eq:correlated-gaussian} retains all covariance cross terms, while the
diagonal reduction applies only to i.i.d. Gaussian disorder. For non-Gaussian
couplings, \eqref{eq:independent-copy} provides the corresponding identity.

The comparison \eqref{eq:sample-bound} gives tightness and uniform
integrability, but not convergence of the normalized mean. The full-sequence
limit in \cref{thm:surface} follows from the half-space comparison. In
$d\geq3$, Gaussian concentration transfers convergence of the normalized mean
to the sample surface free-energy density; it does not establish convergence
of the mean itself. In $d=2$, the general Gaussian estimate is insufficient,
whereas Dobrushin mixing gives the surface-order concentration used in
\cref{thm:surface}.

For cubes, $b_L=2dL^{d-1}$, which accounts for the factor $2d$ in
\eqref{eq:normalization}. The counterexample in \cref{sec:counterexamples}
shows that the microscopic geometry of a general van Hove boundary can change
the normalized surface term. The high-temperature theorem keeps $\beta$ fixed
under the strict condition \eqref{eq:dob-condition}; its constants deteriorate
as $\alpha\uparrow1$. Formula \eqref{eq:gaussian-Delta} does not justify
interchanging $L\to\infty$ and $\beta\to\infty$. Only the Lipschitz seam bounds
remain uniform through $\beta=\infty$.

Finally, under symmetric boundary disorder, the quenched law of $F_L^\psi$ is
independent of a deterministic exterior pattern $\psi$. The domain-wall
observable studied here is therefore a same-disorder difference
$F_L^\psi-F_L^{\psi'}$, or a periodic versus antiperiodic seam, rather than the
free-to-fixed disorder average.

\subsection{Requirements for a stochastic-localization argument}

Following the stochastic-localization construction of \cite{Eldan2013}, a
finite-volume Gibbs measure $\mu_0$ on $\{-1,+1\}^{\LambdaL}$ would generate a
martingale of the form
\begin{equation}\label{eq:localization}
  \dd\mu_s(\sigma)
  =\mu_s(\sigma)\langle\sigma-a_s,\dd B_s\rangle,
  \qquad
  a_s=\E_{\mu_s}\sigma,
  \qquad A_s=\Cov_{\mu_s}(\sigma),
\end{equation}
where $(B_s)_{s\geq0}$ is a standard Brownian motion in
$\mathbb R^{\LambdaL}$. Then $\dd a_s=A_s\dd B_s$. To use this process for the present problem one
would need, uniformly in $L$, interpolation time, and disorder, a quantitative
bound on the boundary projection of $A_s$ and on its coupling to degrees of
freedom at macroscopic distance, strong enough both to select a unique
half-space boundary average and to control the seam-flip free-energy
difference. No such inequality is part of the proposed method, and low-temperature
Edwards--Anderson Gibbs
measures do not come with the log-concavity or spectral-independence hypothesis
that would imply it. Absent such an estimate, stochastic localization does not
establish the required limit.

\section{Open problem for the low-temperature Gaussian model}

For the standard i.i.d. Gaussian Edwards--Anderson model
\cite{EdwardsAnderson1975,NewmanStein2003} on cubic boxes at a fixed low
temperature, the Gaussian boundary interpolation proves the finite-volume
identity
\[
  \frac{\E\Delta F_L}{b_L}
  =-\frac{\beta\kappa^2}{2}\int_0^1
    \frac1{b_L}\sum_{e\in\BL}
       \E\bigl[1-\langle q_e\rangle_{L,t}^2\bigr]\dd t.
\]
The remaining surface-limit problem is to prove that the integral on the right
has a unique full-sequence limit for regular cubes, or to construct a
cube-sequence counterexample. Dobrushin comparison estimates prove this at high
temperature for bounded interactions, but it is unavailable for unbounded
Gaussian couplings and supplies no low-temperature theorem. Compactness gives
subsequences only. In $d\geq3$, the general Gaussian concentration bound shows
that a convergent mean would be the deterministic sample surface limit; in
$d=2$, that bound does not establish self-averaging.

For a periodic versus antiperiodic or boundary-sign domain wall, symmetry
forces $\E D_L=0$ at every $L$ and \cref{thm:seam} proves only the upper bound
$\theta_{\mathrm{rms}}\leq(d-1)/2$. Determining a lower scale, existence of an exponent, or a
limiting distribution requires new information about the coupled Gibbs states
and cannot be recovered from the free-to-fixed disorder-average interpolation.
The zero-temperature examples in \eqref{eq:ring-ground} show that no such
conclusion can follow from symmetry and finite moments alone.

The results above therefore clarify the model-dependent conclusions, prove a
high-temperature surface limit, and isolate the unresolved low-temperature
question. The remaining difficulty is not a formal differentiation or
concentration estimate; it is the analysis of low-temperature boundary states
and domain-wall free energies in a fully specified model.

\Addresses
\end{document}